\newcommand{\be}{\begin{equation}}
\newcommand{\en}{\end{equation}}
\newcommand{\bea}{\begin{eqnarray}}
\newcommand{\ena}{\end{eqnarray}}
\newcommand{\beano}{\begin{eqnarray*}}
\newcommand{\enano}{\end{eqnarray*}}
\newcommand{\bee}{\begin{enumerate}}
\newcommand{\ene}{\end{enumerate}}
\newcommand{\Hil}{{\cal H}}
\newcommand{\F}{{\cal F}}
\newcommand{\Lc}{{\cal L}}
\newcommand{\D}{{\cal D}}
\newcommand{\E}{{\cal E}}
\newcommand{\Sc}{{\cal S}}
\newcommand{\1}{1 \!\! 1}
\newtheorem{thm}{Theorem}
\newtheorem{lemma}[thm]{Lemma}
\newtheorem{prop}[thm]{Proposition}
\newtheorem{defn}[thm]{Definition}
\newtheorem{remark}{Remark}
\newenvironment{proof}{\noindent {\bf Proof:}}{\hfill$\Box$}
\begin{document}

\title{Tridiagonality, supersymmetry and non self-adjoint Hamiltonians}

\author{F. Bagarello$^{1,2}$, F. Gargano$^{1}$, F. Roccati$^3$\\

\small{
$^1$Dipartimento di Ingegneria -  Universit\`a di Palermo,
Viale delle Scienze, I--90128  Palermo, Italy,}\\
\small{ $^2$I.N.F.N -  Sezione di Napoli,}\\
\small{
$^3$Dipartimento di Fisica e Chimica - Emilio Segr\`e, Universit\`a degli Studi di Palermo, 
via Archirafi 36, I-90123 Palermo, Italy.}\\
\small{\emph{Email addresses:}\\
fabio.bagarello@unipa.it,
francesco.gargano@unipa.it,
federico.roccati@unipa.it}
}

\date{}
\maketitle
\begin{abstract}
In this paper we consider some aspects of tridiagonal, non self-adjoint, Hamiltonians and of their supersymmetric counterparts. In particular, the problem of factorization is discussed, and it is shown how the analysis of the eigenstates of these Hamiltonians produce interesting recursion formulas giving rise to biorthogonal families of vectors. Some examples are proposed, and a connection with bi-squeezed states is analyzed.
\end{abstract}


\section{Introduction}\label{sec:intro}

Few years ago  some authors have discussed tridiagonal Hamiltonians, and their factorization, in connection with Supersymmetric quantum mechanics (SUSY QM) and with an eye to orthogonal polynomials, \cite{yam}. Their idea was to show how certain self-adjoint (infinite) tridiagonal matrices can be written as product of two operators, and how these operators can also be used to deduce results on the Susy partner of the original matrix. The construction the authors propose give rise to a three-terms recurrence relation which they analyse in connection with orthogonal polynomials. These polynomials are constructed both for $H=X^\dagger X$, and for its Susy counterpart $H_{susy}=XX^\dagger$.  

In this paper we extend the analysis in the context of tridiagonal matrices which are \underline{not} necessarily self-adjoint. In particular, we do not assume that the diagonal elements are real, and that the non zero off-diagonal entries are related by any symmetry. The rationale behind this choice is that, as we will discuss in Section \ref{sect3}, this can be relevant in connection with $PT$-quantum mechanics and its relatives, \cite{bagbook,ben,mosta}, where the Hamiltonian of a given system is not required to be self-adjoint, but still satisfies some special requirement. For instance, the Hamiltonian could be $PT$-symmetric, $P$ and $T$ being respectively the space parity and the time reversal operators. This {\em extended} quantum mechanics has been proved to be quite relevant in the analysis of gain-loss systems, \cite{gainloss}, from a physical point of view, and from a mathematical side because of the many (and sometimes unexpected) difficulties which arise when going from self-adjoint to non self-adjoint Hamiltonians. In particular, the role of biorthogonal sets of vectors \cite{baginbagbook}, unbounded metric operators \cite{camillo,petr1} and pseudo-spectra \cite{petr2} have been widely studied in this perspective.

The paper is organized as follows: in the next section we introduce the mathematical structure needed for the analysis of our tridiagonal Hamiltonians. Then we discuss their factorization, and we use the operators introduced in this procedure to define the Susy partner of the original Hamiltonian. Of course, since this Hamiltonian $H$ is not self-adjoint, in general, we also discuss the role of $H^\dagger$ and of its Susy partner. Hence we deal with four related Hamiltonians. Among other things, we discuss the consequences of the diagonalization of $H$, showing that three-terms relations can be deduced also in this more general settings. Section \ref{sect3} is devoted to examples, which are treated in many details. In Section \ref{sect4} we consider other kind of tridiagonal matrices, and we discuss their connections with bi-squeezed states of the type originally introduced in \cite{BGS2018}. Section \ref{sect5} contains our conclusions.

\section{The functional settings}\label{sect2}

Let $\Hil$ be an Hilbert space with scalar product $\left<.,.\right>$ and related norm $\|.\|$, and let $\F_\varphi=\{\varphi_n,\, n=0,1,2,3,\ldots\}$ and $\F_\psi=\{\psi_n,\, n=0,1,2,3,\ldots\}$ be two biorthogonal sets of vectors in $\Hil$: $\left<\varphi_n,\psi_k\right>=\delta_{n,k}$. We are assuming here that $\Hil$ is infinite-dimensional, except when stated, and separable. Otherwise, if $dim(\Hil)<\infty$, the treatment simplifies significantly, from the mathematical point of view, mainly because all the operators necessary for us are bounded. In what follows $\F_\varphi$ and $\F_\psi$ will be required to be either $\D$-quasi bases or, much stronger requirement, Riesz bases, \cite{baginbagbook}. For readers' convenience we recall that $\F_\varphi$ and $\F_\psi$ are $\D$-quasi bases if  $\D$ is some dense subspace of $\Hil$, and if, for all $f,g\in\D$,
\bea
\left<f,g\right>=\sum_{n=0}^{\infty}\left<f,\varphi_n\right>\left<\psi_n,g\right>=\sum_{n=0}^{\infty}\left<f,\psi_n\right>\left<\varphi_n,g\right>.\label{r1}
\ena
Quite often $\varphi_n$ and $\psi_n$ also belongs to $\D$. This will be assumed in this paper, as useful working assumption.
$\F_\varphi$ and $\F_\psi$ are (biorthogonal) Riesz bases if an orthonormal (o.n.) basis $\F_e=\{e_n,\, n=0,1,2,3,\ldots\}$ exists in $\Hil$, together with a bounded operator $R$ with bounded inverse, such that $\varphi_n=Re_n$ and $\psi_n=(R^{-1})^\dagger e_n$. In this paper we will always assume that $\D$ is stable under the action of $R$, $R^\dagger$ and their inverse. We also assume that $e_n\in\D$ for all $n$, so that $\varphi_n,\psi_n\in\D$ automatically. We refer to \cite{baginbagbook,BGST2017} for examples when these assumptions are satisfied. We observe that if $\F_\varphi$ and $\F_\psi$ are (biorthogonal) Riesz bases, they are $\D$-quasi bases. The opposite implication is false:  $\D$-quasi bases are, in general, not Riesz bases. Also, they are often not even bases, \cite{baginbagbook}. 

Let now $H$ be an operator, not necessarily bounded or self-adjoint, such that $D(H)\supseteq \D$. Hence $H$ is densely defined. In what follows it will be useful to assume also that $D(H^\dagger)\supseteq \D$.

\begin{defn}
	\label{defn1}
	$H$ is called $(\varphi,\psi)$-tridiagonal if three sequences of complex numbers exist, $\{b_n\}$, $\{a_n\}$ and $\{b_n'\}$, such that
	\be
	\left<\psi_n,H\varphi_m\right>=b_n\delta_{n,m+1}+a_n\delta_{n,m}+b_n'\delta_{n,m-1},
	\label{21}\en
	for all $n,m=0,1,2,3,\ldots$. Furthermore, $H$ is called $e$-tridiagonal if $H$ is $(e,e)$-tridiagonal.
\end{defn}
This definition extends that given in \cite{yam} in two ways: first of all, $H$ is not required to be self-adjoint. For this reason no relation is assumed, in general, between $\{b_n\}$ and $\{b_n'\}$. Also, $a_n$ could be complex or not. Secondly, we are replacing a single basis with two biorthogonal sets, $\F_\varphi$ and $\F_\psi$, none of which is even necessarily a basis. However, as often explicitly checked in concrete examples involving $\D$-quasi bases, \cite{baginbagbook}, both $\F_\varphi$ and $\F_\psi$ are assumed to be complete in $\Hil$: the only vector $f\in\Hil$ which is orthogonal to all the $\varphi_n$'s, or to all the $\psi_n$'s, is $f=0$.

\begin{lemma}\label{lemma1}
	$H$ is  $(\varphi,\psi)$-tridiagonal if and only if $H^\dagger$ is  $(\psi,\varphi)$-tridiagonal. Moreover, if $H$ leaves $\D$ stable and if $\F_\varphi$ and $\F_\psi$ are Riesz bases, then $H$ is  $(\varphi,\psi)$-tridiagonal if and only if $H_0:=R^{-1}HR$ is  $(e,e)$-tridiagonal.
\end{lemma}

The proof is easy and will not be given here. We only want to stress that $D(H_0)\supseteq \D$ and that $\D$ is stable also under the action of $H_0$.

Now, from \eqref{21} it follows that
\be
H\varphi_m=b_{m-1}'\varphi_{m-1}+a_m\varphi_m+b_{m+1}\varphi_{m+1}.
\label{22}\en
In fact, using the biorthogonality between $\F_\psi$ and $\F_\varphi$, we can rewrite equation \eqref{21} as
$$
\left<\psi_n,\left(H\varphi_m-b_{n}'\varphi_{m-1}+a_n\varphi_m+b_{n}\varphi_{m+1}\right)\right>=0,
$$
which must be satisfied for all $n$. Now, since  the set $\F_\psi$ is complete, (\ref{22}) follows. Notice that $b'_{-1}=0$ here and in the following.  In a similar way, recalling that $\psi_n\in D(H^\dagger)$ and that $\left<\psi_n,H\varphi_m\right>=\left<H^\dagger\psi_n,\varphi_m\right>$, from (\ref{21}) and from the completeness of $\F_\varphi$ we find that
\be
H^\dagger\psi_m=\overline{b_{m}'}\,\psi_{m+1}+\overline{a_m}\,\psi_m+\overline{b_{m}}\,\psi_{m-1}.
\label{23}\en
Among other things, this formula shows that $\overline{b_{0}}=b_0=0$. Also,
formulas (\ref{22}) and (\ref{23}) show that $\varphi_m$ is not an eigenstate of $H$, and that $\psi_m$ is not an eigenstate of $H^\dagger$, except if $b_m=b_m'=0$ for all $m$. Clearly, when this happens,  $H$ is diagonal, rather than tridiagonal. Now, if we are under the assumptions of Lemma \ref{lemma1}, (\ref{22}) and (\ref{23}) produce, for $H_0$, the following equalities:
\be
H_0e_m=b_{m-1}'e_{m-1}+a_me_m+b_{m+1}e_{m+1}, \quad H_0^\dagger e_m=\overline{b_{m}'}\,e_{m+1}+\overline{a_m}\,e_m+\overline{b_{m}}\,e_{m-1}.
\label{24}\en

\begin{lemma}\label{lemma2} Let us assume that  $H$ leaves $\D$ stable and that $\F_\varphi$ and $\F_\psi$ are Riesz bases. If $H_0=H_0^\dagger$, then $a_n\in\mathbb{R}$ and $b_m=\overline{b_{m-1}'}$ for all $m\geq0$. Viceversa, if $a_m\in\mathbb{R}$ and $b_m=\overline{b_{m-1}'}$, then $\left<f,H_0g\right>=\left<H_0f,g\right>$, for all $f,g\in\E$, where $\E$ is the linear span of the $e_n$'s.
\end{lemma}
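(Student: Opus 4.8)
The plan is to reduce both implications to the explicit action of $H_0$ and $H_0^\dagger$ on the orthonormal basis $\F_e$, which is already recorded in \eqref{24}, and then to exploit the two defining features of an orthonormal basis: the expansion of a vector in $\F_e$ is unique, and the inner products of the basis vectors are Kronecker deltas. Throughout I would work on $\E$, the linear span of the $e_n$'s. Since $e_n\in\D$ for all $n$ and $\D$ is a subspace, we have $\E\subseteq\D$, and $\D$ (hence $\E$) is stable under $H_0$ as noted right after Lemma \ref{lemma1}, so every expression below is well defined.

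For the forward implication, assume $H_0=H_0^\dagger$. Applying both operators to $e_m$ and inserting \eqref{24}, I would write
\be
b_{m-1}' e_{m-1} + a_m e_m + b_{m+1} e_{m+1} = \overline{b_m}\,e_{m-1} + \overline{a_m}\,e_m + \overline{b_m'}\,e_{m+1}.
\en
Because $\F_e$ is an orthonormal basis, the coefficients of $e_{m-1}$, $e_m$ and $e_{m+1}$ on the two sides must agree. The middle coefficient gives $a_m=\overline{a_m}$, i.e. $a_m\in\mathbb{R}$; the coefficient of $e_{m-1}$ gives $b_{m-1}'=\overline{b_m}$, that is $b_m=\overline{b_{m-1}'}$; and the coefficient of $e_{m+1}$ gives $b_{m+1}=\overline{b_m'}$, which is the very same relation with $m$ shifted. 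This settles the first assertion.

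For the converse, suppose $a_m\in\mathbb{R}$ and $b_m=\overline{b_{m-1}'}$ for all $m$. Since the desired identity $\langle f,H_0 g\rangle=\langle H_0 f,g\rangle$ is antilinear in $f$ and linear in $g$, and since elements of $\E$ are finite combinations of the $e_n$'s, it suffices to verify it on the generators $f=e_n$, $g=e_m$. Using \eqref{24} together with orthonormality I would compute
\be
\langle e_n,H_0 e_m\rangle = b_{m-1}'\delta_{n,m-1} + a_m\delta_{n,m} + b_{m+1}\delta_{n,m+1},
\en
\be
\langle H_0 e_n,e_m\rangle = \overline{b_{n-1}'}\,\delta_{n-1,m} + \overline{a_n}\,\delta_{n,m} + \overline{b_{n+1}}\,\delta_{n+1,m}.
\en
Both sides vanish unless $|n-m|\le 1$, so only three cases remain. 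For $n=m$ the expressions reduce to $a_m$ and $\overline{a_m}$, equal because $a_m$ is real; for $n=m-1$ they reduce to $b_{m-1}'$ and $\overline{b_m}$, equal by hypothesis; for $n=m+1$ they reduce to $b_{m+1}$ and $\overline{b_m'}$, equal because the hypothesis read at level $m+1$ is exactly $b_{m+1}=\overline{b_m'}$. Extending by sesquilinearity then yields the claim for all $f,g\in\E$.

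The computations are routine; the only points requiring care are the bookkeeping at the boundary, where one recalls $b_{-1}'=0$ and $b_0=0$ so that the $m=0$ rows carry no spurious terms, and the correct placement of the complex conjugates, which hinges on the inner product being antilinear in its first slot (the convention fixed by $\langle\psi_n,H\varphi_m\rangle=\langle H^\dagger\psi_n,\varphi_m\rangle$). I would also emphasize the genuine asymmetry between the two halves: the forward direction uses operator equality on $\E$, whereas the converse only produces symmetry of the sesquilinear form $\langle f,H_0 g\rangle$ on $\E$ and does \emph{not} by itself upgrade to $H_0=H_0^\dagger$ as operators, since that would require controlling the domains of $H_0$ and $H_0^\dagger$ beyond $\E$. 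This domain subtlety is the real conceptual obstacle, and it is precisely the reason the lemma is stated in the weaker ``symmetric form'' for the converse.
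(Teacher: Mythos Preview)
Your proof is correct and follows exactly the approach the paper intends: the paper merely states that ``the proof is a simple consequence of formula (\ref{24})'' together with the boundary observation $b_0=b'_{-1}=0$, and you have supplied precisely those routine computations by comparing coefficients in the orthonormal basis and then checking the three cases $|n-m|\le 1$ for the converse. Your closing remark on why the converse yields only symmetry on $\E$ rather than $H_0=H_0^\dagger$ is a welcome clarification that the paper leaves implicit.
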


The proof is a simple consequence of formula (\ref{24}). In particular, $b_m=\overline{b_{m-1}'}$ is automatically satisfied for $m=0$, since, as we have already noticed, $b_0=b'_{-1}=0$.  Of course, $\E$ is dense in $\Hil$, since $\F_e$ is an o.n. set of vectors in the dense set $\D$. Hence $\F_e$ is an o.n. basis for $\Hil$. Notice that this Lemma shows that also $H_0$ can be non self-adjoint. This is often not the case in PT-quantum mechanics, \cite{ben}, or for pseudo-hermitian operators, \cite{mosta}, where non self-adjoint Hamiltonians are shown to be similar to self-adjoint ones, and the similarity is implemented exactly as above, in $H_0:=R^{-1}HR$. But this is not what happens, in general, in this paper.

\subsection{Factorization}
\label{Factorization}
Following \cite{yam}, we now discuss when and how $H$ can be factorized, and we use this factorization to introduce two more Hamiltonians, the supersymmetric versions of $H$ and $H^\dagger$. 

Let us first introduce an operator $X$ on $\Lc_\varphi=l.s.\{\varphi_n\}$, the linear span of the $\varphi_n$'s. Of course, this set is dense in $\Hil$  if $\F_\varphi$ is complete in $\Hil$, \cite{baginbagbook}. We put
\be
X\varphi_n=c_n\varphi_n+d_n\varphi_{n-1}
\label{25}
\en
It is clear that $X$ is not a lowering operator for $\F_\varphi$, if $c_n\neq0$. Completeness of $\F_\varphi$, and its biorthogonality with $\F_\psi$, allows us to deduce that $X^\dagger\psi_n=\overline{c_n}\,\psi_n+\overline{d_{n+1}}\,\psi_{n+1}$, which is a raising operator for $\F_\psi$ only if $c_n=0$ for all $n$. Similarly, we can introduce an operator $Y$ on the linear span of the $\psi_n$'s, $\Lc_\psi$, as in (\ref{25}):
\be
Y\psi_n=c_n'\psi_n+d_n'\psi_{n-1},
\label{26}
\en
whose adjoint, $Y^\dagger$, acts on $\varphi_n$ as follows: $Y^\dagger\varphi_n=\overline{c_n'}\,\varphi_n+\overline{d_{n+1}'}\,\varphi_{n+1}$. Again, $Y$ and $Y^\dagger$ are not ladder operators, except if $c_n'=0$. Also, we require that $d_0=d_0'=0$, in order to avoid the appearance of $\varphi_{-1}$ or $\psi_{-1}$ in the two formulas above.  Now, the following can be easily checked: $H\varphi_n=Y^\dagger X\varphi_n$ if the following relations are true:
\be
a_n=c_n\,\overline{c_n'}+d_n\,\overline{d_n'}, \qquad b_{n-1}'=d_n\,\overline{c_{n-1}'}, \qquad b_{n+1}=c_n\,\overline{d_{n+1}'}.
\label{27}\en
Under the same conditions we also deduce the following equality:  $H^\dagger\psi_n=X^\dagger Y\psi_n$, which shows, not surprisingly, that also $H^\dagger$ can be factorized in terms of the same operators. In the following, to simplify the notation, we will often write $H=Y^\dagger X$ and $H^\dagger=X^\dagger Y$. The operators $X$ and $Y^\dagger$ satisfy the following commutation relation
\be
[X,Y^\dagger]\varphi_n=\left(d_{n+1}\overline{d_{n+1}'}-d_{n}\overline{d_{n}'}\right)\varphi_n+d_n\left(\overline{c_{n}'}-\overline{c_{n-1}'}\right)\varphi_{n-1}+\left(c_{n+1}-c_n\right)\overline{d_{n+1}'}\varphi_{n+1}.
\label{28}\en
Notice that, in particular, if $X$ and $Y$ are ladder operators (so that, $c_n=c_n'=0$), then this formula simplifies and returns $[X,Y^\dagger]\varphi_n=\left(d_{n+1}\overline{d_{n+1}'}-d_{n}\overline{d_{n}'}\right)\varphi_n$, which becomes the standard pseudo-bosonic commutation relation, \cite{baginbagbook,bag2,bag3,bag4}, if $d_n=d_n'=\sqrt{n}$: $[X,Y^\dagger]\varphi_n=\varphi_n$, for all $n$.

\begin{remark}

It is interesting to notice that, when $c_n=0$, even if $d_n\neq\sqrt{n}$, it is always possible to define new vectors, $\hat{\varphi}_n$, satisfying $X\hat{\varphi}_n=\sqrt{n}\hat\varphi_{n-1}$. 
It is enough to put $\hat\varphi_n=\frac{\sqrt{n!}}{d_n!}\varphi_n$, $n=0,1,2,3,\ldots$, where $d_0!=1$ and $d_n!=d_1d_2\cdots d_n$, $n\geq1$. Analogously, if $c_n'=0$ and $d_n'\neq\sqrt{n}$, it is again possible to define the new vectors $\hat\psi_n=\frac{\sqrt{n!}}{d_n'!}\psi_n$, satisfying $Y\hat{\psi_n}=\sqrt{n}\hat\psi_{n-1}$. 
Of course, this change of normalization of the vectors have consequences in formula (\ref{27}), and in the computation of $$\left<\hat\varphi_n,\hat\psi_m\right>=\frac{n!}{d_n!\,d_n'!}\delta_{n,m}.$$ In general, these two families are still biortogonal, but no longer biorthonormal. 

\end{remark}

\begin{remark}
Even if, in general, $X$ and $Y^\dag$ are not pseudo-bosonic operators, we can still consider linear combinations of them, $C:=\alpha X+\beta Y^\dagger$, $D:=\gamma X+\delta Y^\dagger$, and look for conditions on the coefficients such that $[C,D]\varphi_n=\varphi_n$, $\forall n\geq0$. In particular, if $\alpha\delta\neq\beta\gamma$, we have
\beano
[C,D]\varphi_n=(\alpha\delta-\beta\gamma)\Big( (d_{n+1}\overline{d'_{n+1}}-d_n\overline{d'_{n}}) \varphi_n+ (d_{n}\overline{c'_{n}}-d_{n}\overline{c'_{n-1}}) \varphi_{n-1}+\\
 (c_{n+1}\overline{d'_{n+1}}-c_{n}\overline{d'_{n+1}})\varphi_{n+1}\Big),
\enano
which reduces to $[C,D]\varphi_n=\varphi_n$ by fixing $c_n=c_0$, ${c_n'}={c_0'}$, $\forall n>0$ and
$d_n\overline{d'_n}=\frac{n}{\alpha\delta-\beta\gamma},\forall n\geq0$.
Consequently, we also have $[D^\dag,C^\dag]\psi_n=\psi_n,\quad \forall n\geq0$. 
We observe that $H=Y^\dagger X$ can be written in terms of the operators $C,D$ as
$$
H=-\frac{1}{(\alpha\delta-\beta\gamma)^2}\left(\delta\gamma C^2+\alpha\beta D^2-(\alpha\delta+\beta\gamma)CD +\alpha\delta\1\right).
$$
\end{remark}

Having factorized $H$ and $H^\dagger$, it is natural to consider now their Susy partners $H_{susy}=XY^\dagger$ and $H_{susy}^\dagger=YX^\dagger$. Using formula (\ref{25}) and $Y^\dagger\varphi_n=\overline{c_n'}\,\varphi_n+\overline{d_{n+1}'}\,\varphi_{n+1}$ we deduce that
\be
H_{susy}\varphi_n=B_{n-1}'\varphi_{n-1}+A_n\varphi_n+B_{n+1}\varphi_{n+1},
\label{29}\en
where
\be
A_n=c_n\,\overline{c_n'}+d_{n+1}\,\overline{d_{n+1}'}, \qquad B_{n-1}'=d_n\,\overline{c_{n}'}, \qquad B_{n+1}=c_{n+1}\,\overline{d_{n+1}'}.
\label{210}\en
Of course, (\ref{29}) implies that $H_{susy}$ is $(\varphi,\psi)$-tridiagonal:
$$
\left<\psi_n,H_{susy}\varphi_m\right>=B_n\delta_{n,m+1}+A_n\delta_{n,m}+B_n'\delta_{n,m-1},
$$
which coincides with (\ref{21}), with $(a_n,b_n,b_n')$ replaced by $(A_n,B_n,B_n')$. Hence, Lemma \ref{lemma1} implies that $H_{susy}^\dagger$ is $(\psi,\varphi)$-tridiagonal, and we can easily check that
\be
H_{susy}^\dagger\psi_n=\overline{B_{n}}\psi_{n-1}+\overline{A_n}\psi_n+\overline{B_{n}'}\psi_{n+1},
\label{211}\en
which coincides with formula (\ref{23}) with the above replacement.

\begin{remark} If $X$ and $Y$ are lowering operators, we have $c_n=c_n'=0$, and we find $a_n=d_n\overline{d_n'}$, $A_n=d_{n+1}\overline{d_{n+1}'}=a_{n+1}$, $b_n=b_n'=B_n=B_n'=0$. Hence
	$$
	H\varphi_n=a_n\varphi_n, \quad H^\dagger\psi_n=\overline{a_n}\psi_n, \quad 	H_{susy}\varphi_n=a_{n+1}\varphi_n, \quad H_{susy}^\dagger\psi_n=\overline{a_{n+1}}\psi_n,
	$$
as expected. In this case, $\F_\varphi$ and $\F_\psi$ are eigenstates of $H$ and $H_{susy}$, and of $H^\dagger$ and $H_{susy}^\dagger$, respectively.

\end{remark}

\subsection{Diagonalization of the Hamiltonians and consequences}
\label{secDiag}

As we have already noticed, if $H$ is  $(\varphi,\psi)$-tridiagonal, then $\F_\varphi$ is not a set of eigenstates of $H$. However, we can use its vectors to look for these eigenstates, at least if $\F_\varphi$ is a basis for $\Hil$, which is what we will assume here. This implies  that its  biorthogonal set $\F_\psi$ is a basis as well, \cite{chri}.

Let $\Phi_n$ be an eigenstate of $H$, with eigenvalue $E_n$:
\be
H\Phi_n=E_n\Phi_n.\label{212}\en 
Of course, in general,  $E_n$ is also unknown. We expand $\Phi_n$ in terms of $\F_\varphi$, and we use its biorthogonality with $\F_\psi$. Hence we have
\be
\Phi_n=\sum_kc_k^{(n)}\,\varphi_k, \qquad c_k^{(n)}=\left<\psi_k,\Phi_n\right>.
\label{213}\en
Now, assuming that $H\sum_kc_k^{(n)}\,\varphi_k=\sum_kc_k^{(n)}\,H\varphi_k$, which is true, for instance, if $H$ is bounded or under some closability condition on $H$, and using (\ref{22}) and the biorthogonalities of $\F_\psi$ and $\F_\varphi$, we deduce the following relation between the coefficients:
\be
 c_l^{(n)}E_n=c_{l-1}^{(n)}b_l+c_l^{(n)}a_l+c_{l+1}^{(n)}b_l',
 \label{214}\en
where $c_{-1}^{(n)}=0$. In complete analogy we can look for eigenstates of $H^\dagger$ using $\F_\psi$: let $\eta_n$ be the eigenstate of $H^\dagger$ corresponding to the eigenvalue $\overline{E_n}$:
$$
H^\dag\eta_n=\overline{E_n}\eta_n.$$
We expand $\eta_n$ in terms of $\F_\psi$:
\be
\eta_n=\sum_kd_k^{(n)}\,\psi_k, \qquad d_k^{(n)}=\left<\varphi_k,\eta_n\right>.
\label{215}\en
Now, if $H^\dagger\sum_kd_k^{(n)}\,\psi_k=\sum_kd_k^{(n)}\,H^\dagger\psi_k$, we deduce the following relation, quite similar to that in (\ref{214}):
\be
\overline{d_l^{(n)}}E_n=\overline{d_{l-1}^{(n)}}b'_{l-1}+\overline{d_l^{(n)}}a_l+\overline{d_{l+1}^{(n)}}b_{l+1},
\label{216}\en
where, obviously, we have set $d_{-1}^{(n)}=0$.
A comparison between this formula and (\ref{214}) shows that, if $b_l=b_{l-1}'$, once $c_l^{(n)}$ is computed, then $d_l^{(n)}$ can be easily deduced by taking $d_l^{(n)}=\overline{c_l^{(n)}}$.

\begin{remark}
	Notice that,  if $b_l=b_{l-1}'$, formula (\ref{21}) becomes $\left<\psi_n,H\varphi_m\right>=b_n\delta_{n,m+1}+a_n\delta_{n,m}+b_{n+1}\delta_{n,m-1}$, which is, if $\varphi_n=\psi_n$, the starting point of the analysis proposed in \cite{yam}.
\end{remark}

The coefficients $c_l^{(n)}$ and $d_l^{(n)}$ satisfy some summation formulas which are deduced in the following Proposition.

\begin{prop}\label{prop4} The coefficients $c_l^{(n)}$ and $d_l^{(n)}$ satisfy the equation
	\be
	\sum_k \overline{c_k^{(n)}}\,d_k^{(m)}=\left<\Phi_n,\eta_m\right>=\delta_{n,m},
	\label{217}\en
	where the last equality holds if each eigenvalue of $H$ has multiplicity one and if the normalizations of $\Phi_n$ and $\eta_n$ are chosen in such a way that $\left<\Phi_n,\eta_n\right>=1$.   
	
	Also, if $\F_\Phi=\{\Phi_n\}$ and $\F_\eta=\{\eta_n\}$ are $\D$-quasi bases, then
	\be
	\sum_n \overline{c_k^{(n)}}\,d_l^{(n)}=\delta_{k,l},
	\label{218}\en

\end{prop}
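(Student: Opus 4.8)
The plan is to treat the two summation formulas \eqref{217} and \eqref{218} separately, since they rest on different structural inputs. For the left equality in \eqref{217} I would simply insert the expansions \eqref{213} and \eqref{215} into the scalar product. Writing $\Phi_n=\sum_k c_k^{(n)}\varphi_k$ and $\eta_m=\sum_j d_j^{(m)}\psi_j$, conjugate-linearity in the first slot and linearity in the second yield $\langle\Phi_n,\eta_m\rangle=\sum_{k,j}\overline{c_k^{(n)}}\,d_j^{(m)}\langle\varphi_k,\psi_j\rangle$, and the biorthogonality $\langle\varphi_k,\psi_j\rangle=\delta_{k,j}$ collapses the double series to $\sum_k\overline{c_k^{(n)}}\,d_k^{(m)}$. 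The only point demanding care is the interchange of the scalar product with the two infinite sums, which is licensed by the convergence of the basis expansions (we are assuming here that $\F_\varphi$, hence $\F_\psi$, is a basis) together with the continuity of the inner product.

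For the right equality in \eqref{217}, namely $\langle\Phi_n,\eta_m\rangle=\delta_{n,m}$, I would run the standard biorthogonality-of-eigenvectors argument. Using $\Phi_n\in D(H)$, $\eta_m\in D(H^\dagger)$ and the eigenvalue equations $H\Phi_n=E_n\Phi_n$ and $H^\dagger\eta_m=\overline{E_m}\,\eta_m$, I evaluate $\langle H\Phi_n,\eta_m\rangle$ in two ways: acting on the left factor gives $\overline{E_n}\,\langle\Phi_n,\eta_m\rangle$, while transferring $H$ onto the right factor through the adjoint gives $\langle\Phi_n,H^\dagger\eta_m\rangle=\overline{E_m}\,\langle\Phi_n,\eta_m\rangle$. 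Subtracting, $(\overline{E_n}-\overline{E_m})\langle\Phi_n,\eta_m\rangle=0$. The multiplicity-one hypothesis forces $E_n\neq E_m$ whenever $n\neq m$, so the off-diagonal products vanish, and the normalization $\langle\Phi_n,\eta_n\rangle=1$ supplies the diagonal.

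For \eqref{218} the idea is to read the sum as a resolution of the identity furnished by the quasi-basis property of $\F_\Phi$ and $\F_\eta$. Having just shown that $\{\Phi_n\}$ and $\{\eta_n\}$ are biorthogonal, the assumption that they are $\D$-quasi bases gives, for all $f,g\in\D$, the expansion $\langle f,g\rangle=\sum_n\langle f,\eta_n\rangle\langle\Phi_n,g\rangle$, the analogue for the pair $(\F_\Phi,\F_\eta)$ of the second resolution in \eqref{r1}. Since $\varphi_l,\psi_k\in\D$ by the working assumptions, I would specialize to $f=\varphi_l$ and $g=\psi_k$. The left-hand side is $\langle\varphi_l,\psi_k\rangle=\delta_{l,k}$ by biorthogonality, while on the right $\langle\varphi_l,\eta_n\rangle=d_l^{(n)}$ and $\langle\Phi_n,\psi_k\rangle=\overline{c_k^{(n)}}$ follow directly from the defining relations \eqref{213} and \eqref{215}. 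This reproduces $\sum_n\overline{c_k^{(n)}}\,d_l^{(n)}=\delta_{k,l}$.

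Since the algebra is essentially forced, the genuine content lies in the hypotheses that make each step legitimate, and I expect the main obstacle to be the simple-spectrum assumption in \eqref{217}. Without multiplicity one the two-way computation only gives vanishing of $\langle\Phi_n,\eta_m\rangle$ on pairs with $E_n\neq E_m$, and degenerate eigenvalues would force a separate choice of basis inside each eigenspace before $\delta_{n,m}$ could be recovered; the remaining technicalities (the sum/inner-product exchange, and the domain memberships $\Phi_n\in D(H)$, $\eta_m\in D(H^\dagger)$ that validate the adjoint step, both covered by $D(H),D(H^\dagger)\supseteq\D$ once the eigenvectors are taken in $\D$) are comparatively benign.
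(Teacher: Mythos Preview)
Your proof is correct and follows essentially the same route as the paper: for \eqref{218} the argument is identical, and for the left equality in \eqref{217} the paper phrases the computation via the $\D$-quasi basis resolution \eqref{r1} applied to $f=\Phi_n$, $g=\eta_m$ (writing $\overline{c_k^{(n)}}\,d_k^{(m)}=\langle\Phi_n,\psi_k\rangle\langle\varphi_k,\eta_m\rangle$) rather than expanding both vectors, but this is the same manipulation read in the opposite direction. The right equality in \eqref{217} is simply cited as well known in the paper, whereas you spell out the standard adjoint argument.
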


\begin{proof}
First of all, using the resolution of the identity in $\D$ given by \eqref{r1} we have
$$
\sum_k \overline{c_k^{(n)}}\,d_k^{(m)}=\sum_k \left<\Phi_n,\psi_k\right>\left<\varphi_k,\eta_m\right>=\left<\Phi_n,\eta_m\right>.
$$
The fact that $\left<\Phi_n,\eta_m\right>=0$ if $n\neq m$, at least if the multiplicity of $E_n$ is one, is well known.

Equation (\ref{218}) can be proved as follows:
$$
	\sum_n \overline{c_k^{(n)}}\,d_l^{(n)}=	\sum_n \left<\varphi_l,\eta_n\right>\left<\Phi_n,\psi_k\right>= \left<\varphi_l,\psi_k\right>=\delta_{k,l},
$$
where we have used the hypothesis that  $\F_\Phi=\{\Phi_n\}$ and $\F_\eta=\{\eta_n\}$ are $\D$-quasi bases and that $\varphi_l,\psi_k\in\D$. The last equality follows from the biorthogonality of $\F_\varphi$ and $\F_\psi$.

\end{proof}

Defining next the following quantities
\be
p_l^{(n)}=\frac{c_l^{(n)}}{c_0^{(n)}}, \qquad \qquad q_l^{(n)}=\frac{d_l^{(n)}}{d_0^{(n)}},
\label{219}\en
we observe that
\be
p_{-1}^{(n)}=q_{-1}^{(n)}=0, \qquad p_{0}^{(n)}=q_{0}^{(n)}=1, \qquad \forall n\geq0.
\label{220}\en

Formulas (\ref{214}) and (\ref{216}) can be rewritten as the following recurrence equations:
\be
p_{l+1}^{(n)}=\frac{1}{b_l'}\left(p_l^{(n)}(E_n-a_l)-p_{l-1}^{(n)}b_l\right)
\label{221}\en
and
\be
\overline{q_{l+1}^{(n)}}=\frac{1}{b_{l+1}}\left(\overline{q_l^{(n)}}(E_n-a_l)-\overline{q_{l-1}^{(n)}}b_{l-1}'\right)
\label{222}\en
which produce, in principle, the sequences $\{p_l^{(n)}\}$ and $\{q_l^{(n)}\}$, and $\{c_l^{(n)}\}$ and $\{d_l^{(n)}\}$ from (\ref{219}) as a consequence,  using (\ref{220}). Of course, $E_n$ must be known in order to compute explicitly  these coefficients. This is what happens in some situations, as the examples in the next section show.

We conclude this section adapting these results, and formulas (\ref{221}) and (\ref{222}) in particular, to the Susy partners of $H$ and $H^\dagger$. We recall that they are both tridiagonal. In particular, $H_{susy}$ is $(\varphi,\psi)$-tridiagonal, and $H_{susy}^\dagger$ is $(\psi,\varphi)$-tridiagonal. Also, we have already noticed that one can go from $(H,H^\dagger)$ to $(H_{susy},H_{susy}^\dagger)$ simply replacing $(a_n,b_n,b_n')$ with $(A_n,B_n,B_n')$. Hence, starting with the following eigenvalue equations,
\be
H_{susy}\tilde{\Phi}_n=\E_n\tilde{\Phi}_n, \qquad \qquad H_{susy}^\dagger\tilde{\eta}_n=\overline{\E_n}\,\tilde{\eta}_n,
\label{223}\en
and expanding $\tilde\Phi_n$ and $\tilde{\eta_n}$ as follows,
$$
\tilde\Phi_n=\sum_k\tilde c_k^{(n)}\,\varphi_k, \quad \tilde\eta_n=\sum_k\tilde d_k^{(n)}\,\psi_k, \qquad \tilde c_k^{(n)}=\left<\psi_k,\tilde\Phi_n\right>,\quad  \tilde d_k^{(n)}=\left<\varphi_k,\tilde\eta_n\right>,
$$
the following counterparts of (\ref{221}) and (\ref{222}) can be found:
\be
P_{l+1}^{(n)}=\frac{1}{B_l'}\left(P_l^{(n)}(\E_n-A_l)-P_{l-1}^{(n)}B_l\right)
\label{224}\en
and
\be
\overline{Q_{l+1}^{(n)}}=\frac{1}{B_{l+1}}\left(\overline{Q_l^{(n)}}(\E_n-A_l)-\overline{Q_{l-1}^{(n)}}B_{l-1}'\right).
\label{225}\en
Here we have introduced the {\em normalized} coefficients \be
P_l^{(n)}=\frac{\tilde c_l^{(n)}}{\tilde c_0^{(n)}}, \qquad \qquad Q_l^{(n)}=\frac{\tilde d_l^{(n)}}{\tilde d_0^{(n)}},
\label{226}\en
which obey, in particular,
\be
P_{-1}^{(n)}=Q_{-1}^{(n)}=0, \qquad P_{0}^{(n)}=Q_{0}^{(n)}=1, \qquad \forall n\geq0.
\label{227}\en
Of course, $\tilde c_k^{(n)}$ and $\tilde d_k^{(n)}$ satisfy the analogous of Proposition \ref{prop4}. In particular, for instance, if $\F_{\tilde\Phi}=\{\tilde\Phi_n\}$ and $\F_{\tilde\eta}=\{\tilde\eta_n\}$ are $\D$-quasi bases, then
$
\sum_n \overline{\tilde c_k^{(n)}}\,\tilde d_l^{(n)}=\delta_{k,l}.$

\section{Examples}\label{sect3}

This section is devoted to the analysis of some examples of our general framework. In particular, in Section \ref{sect3.a} we propose a rather general method to produce general non self-adjoint tridiagonal matrices. In Section \ref{sect3.b} we analyse in all details a shifted harmonic oscillator, with particular attention to the three terms relations previously introduced.

\subsection{A shifted quantum well}\label{sect3.a}

Let $H_0=p^2+V(x)$, where $p=-i\frac{d}{dx}$ is the momentum operator and $V(x)$ is the potential which is zero for $x\in[0,\pi]$, and infinite outside this region. $H_0$ is therefore the self-adjoint Hamiltonian of a particle of mass $m=\frac{1}{2}$ in an infinitely deep square-well potential. It is well known that 
\be
H_0\,e_n(x)=E_ne_n(x), \qquad E_n=(n+1)^2,\qquad e_n(x)=\sqrt{\frac{2}{\pi}}\,\sin((n+1)x),
\label{sqw1}\en
where $x\in[0,\pi]$ and $n=0,1,2,3,\ldots$. In \cite{Dong} it is shown how $H_0$, as well as the Hamiltonians of many other physical systems, can be factorized. First we introduce the number operator $\hat N$ defined on the vectors $e_n(x)$, which all together form an o.n. basis for $\Hil=\Lc^2(0,\pi)$:  $\hat Ne_n=ne_n$, $n\geq0$. Of course $\hat N$ is not bounded and it is not invertible. However, $\hat N+\1$ is invertible, and $(\hat N+\1)^{-1}$ is bounded. Following \cite{Dong} we define the following operators:
$$
\hat M_+=\cos(x)(\hat N+\1)+\sin(x)\frac{d}{dx}, \quad \hat M_-=\left(\cos(x)(\hat N+\1)-\sin(x)\frac{d}{dx}\right)(\hat N+\1)^{-1}\hat N.
$$
They are ladder operators since they satisfy
$$
\hat M_+e_n=(n+1)e_{n+1}, \qquad \hat M_-e_n=ne_{n-1},
$$
where we put $e_{-1}=0$. Hence it is possible to see that $H_0e_n=\hat M_-\hat M_+e_n$: Furthermore, we cal also check that
$$
\hat M_+^\dagger e_n=\hat M_-e_n, \qquad \hat M_-^\dagger e_n=\hat M_+e_n,\qquad  [\hat M_-,\hat M_+]e_n=(2\hat N+\1)e_n
$$
for all $n$. Now, let us consider the following shifted version of the ladder operators $\hat M_\pm$: $B=\hat M_++\alpha\1$, $A=\hat M_-+\beta\1$, $\alpha, \beta\in\mathbb{C}$, and the related {\em shifted Hamiltonian}  $h=BA$. It is easy to check that $h$ is $(e,e)$-tridiagonal:
\be
he_n=\alpha ne_{n-1}+(n^2+\alpha\beta)e_n+\beta (n+1)e_{n+1},
\label{sqw2}\en
which coincides with (\ref{22}) taking $b_{n-1}'=\alpha n$, $b_{n+1}=\beta(n+1)$ and $\alpha_n=n^2+\alpha\beta$. Now, since $Ae_n=\beta e_n+ne_{n-1}$ and $B=\alpha e_n+(n+1)e_{n+1}$, the coefficients in (\ref{25}) and (\ref{26}) are $c_n=\beta$, $c_n'=\overline{\alpha}$, $d_n=d_n'=n$ and the identities in (\ref{210}) are satisfied.

As for the other Hamiltonians connected to $h$, it is easy to check that for $h^\dagger$, which is clearly $(e,e)$-tridiagonal in view of Lemma \ref{lemma1} (as an explicit computation also shows), coincides with $h$ but with $\alpha$ replaced by $\overline{\beta}$ and viceversa. As for their Susy partners, we have, for instance
$$
h_{susy}=AB=[A,B]+h= h+(2\hat N+\1),
$$
since $[A,B]=[\hat M_-+\beta\1,\hat M_++\alpha\1]=[\hat M_-,\hat M_+]=(2\hat N+\1)$. It follows that
$$
h_{susy}e_n=\alpha ne_{n-1}+((n+1)^2+\alpha\beta)e_n+\beta (n+1)e_{n+1},
$$
which shows that $A_n=a_{n+1}$, $B_n=b_n$ and $B_n'=b_n'$.

\vspace{2mm}

{\bf Remark:--} It is clear that the same approach can be extended to all systems whose self-adjoint Hamiltonian can be factorized in terms of ladder operators, as those included in \cite{Dong}. Once we have an $\tilde H_0=\tilde H_0^\dagger=Q^\dagger Q$, with eigenstates $f_n$ and eigenvalues $\E_n$, $\tilde H_0f_n=\E_nf_n$, shifting $Q$ and $Q^\dagger$ with two different complex quantities, $Q\rightarrow Q+\beta\1$ and $Q^\dagger\rightarrow Q^\dagger+\alpha\1$, with $\alpha$ possibly different from $\overline{\beta}$, the non self-adjoint operator $\tilde h=( Q^\dagger+\alpha\1)(Q+\beta\1)$ is $(f,f)$-tridiagonal, with obvious notation.  What is not easy, or possible, in general, is to make use of the recurrence relation (\ref{221}) to deduce the eigenstates of $\tilde h$, since its eigenvalues are not known a priori. In the next example and in Section \ref{sect4} we will discuss an example where this is not so, and the recurrence relations can be efficiently used to deduce the eigenvectors of the analogous of $\tilde h$.

\subsection{The shifted harmonic oscillator}\label{sect3.b}

This model has been discussed by several authors, in slightly different forms, mainly in the context of pseudo-hermitian (or PT) quantum mechanics, \cite{ben,mosta}. Some useful references are \cite{dav1,dav2,bag5,bag6,siegl,milos}.

Let $c$ be a lowering operator on $\Hil$ satisfying the canonical commutation relation $[c,c^\dagger]=\1$. Of course, this equality must be understood on a suitable dense subspace of $\Hil$, since $c$ and $c^\dagger$ are unbounded. For instance, if $c=\frac{1}{\sqrt{2}}\left(x+\frac{d}{dx}\right)$, the Hilbert space is $\Hil=\Lc^2(\mathbb{R})$ and the dense set can be identified with $\Sc(\mathbb{R})$, the set of the fast decreasing test functions. If we introduce the vacuum of $c$, that is a (normalized) vector $e_0\in\Hil$ satisfying $ce_0=0$, we can act on it with powers of $c^\dagger$: $e_n=\frac{(c^\dagger)^n}{\sqrt{n!}}\,e_0$. The resulting vectors, $\{e_n\}$, form an o.n. basis of $\Hil$, which is all made by functions of $\Sc(\mathbb{R})$ if $c$ is represented as above. These vectors are eigenstates of $H_0=c^\dagger c$: $H_0e_n=ne_n$, $n=0,1,2,\ldots$.

Let us now define $a=c+\alpha\1$ and $b=c^\dagger+\beta\1$, for some $\alpha,\beta\in\mathbb{C}$, with $\alpha\neq\overline{\beta}$. These operators are $\D$-pseudo bosonic, \cite{baginbagbook,bag5,bag6}, where, using the coordinate representation for $c$ and $c^\dagger$,  $\D$ can be identified with $\Sc(\mathbb{R})$. In particular, for instance, $[a,b]f=f$ for all $f\in\D$. If we now call $H=ba=H_0+(\alpha c^\dagger+\beta c)+\alpha\beta\1$, we find that
\be
He_n=(n+\alpha\beta)e_n+\alpha\sqrt{n+1}e_{n+1}+\beta\sqrt{n}e_{n-1},
\label{31}
\en
so that
$\left<e_n,He_m\right>=(n+\alpha\beta)\delta_{n,m}+\alpha\sqrt{n}\delta_{n,m+1}+\beta\sqrt{n+1}\delta_{n,m-1}$. We see that $H$ is $(e,e)$-tridiagonal, like $H^\dagger$. Incidentally, we also observe that $H^\dagger$ coincides with $H$, but with $(\alpha,\beta)$ replaced by $(\overline{\beta},\overline{\alpha})$. 

Now, since $ce_n=\sqrt{n}\,e_{n-1}$ and $c^\dagger e_n=\sqrt{n+1}\,e_{n+1}$, we see that $ae_n=(c+\alpha\1)e_n=\alpha e_n+\sqrt{n}\,e_{n-1} $, while $be_n=(c^\dagger+\beta\1)e_n=\beta e_n+\sqrt{n+1}\,e_{n+1} $, so that $X=a$ and $Y^\dagger=b$ only if the following identifications hold:
\be
c_n=\alpha, \quad c_n'=\overline{\beta}, \quad d_n=d_n'=\sqrt{n}. \label{32}\en Therefore, since formula (\ref{31}) implies that $b_n=\alpha\sqrt{n}$, $a_n=n+\alpha\beta$ and $b_n'=\beta\sqrt{n+1}$, the equalities in (\ref{27}) are satisfied. It is clear that, in the present example, the commutation relation in (\ref{28}) simplifies: $[X,Y^\dagger]e_n=[a,b]e_n=e_n$, for all $n=0,1,2,3,\ldots$.

As for $H_{susy}=ab$, we easily see that 
$$
H_{susy}e_n=([a,b]+H)e_n=(H+\1)e_n=(n+1+\alpha\beta)e_n+\alpha\sqrt{n+1}e_{n+1}+\beta\sqrt{n}e_{n-1},
$$
which coincides with (\ref{31}) expect that $n+\alpha\beta$ is now replaced by $n+1+\alpha\beta$. We observe that $A_n=a_{n+1}$, $B_n=b_n$ and $B_n'=b_n'$, and that 
$$
H_{susy}^\dagger e_n=(n+1+\overline{\alpha}\overline{\beta})e_n+\overline{\alpha}\sqrt{n}e_{n-1}+\overline{\beta}\sqrt{n+1}e_{n+1}.
$$
It is now interesting to discuss the role of (\ref{221}) and (\ref{222}) in this example. This is particularly simple here since we know that the eigenvalues of $H$ and $H^\dagger$ are just $E_n=n$, for all $n=0,1,2,\ldots$.

Let us first take $n=0$, and look for the ground state of $H=ba$: $H\Phi_0=0$. Such an eigenstate can be easily found, simply by looking at the vacuum of $a$. Of course, $a\Phi_0=0$ if and only if $c\Phi_0=-\alpha\Phi_0$. This means that $\Phi_0$ is (proportional to) a standard coherent state, \cite{gazeaubook,perelomov,didier,aag}, with parameter $-\alpha$:
\be
\Phi_0=N_\Phi e^{-\alpha c^\dagger+\overline{\alpha}c}e_0=N_\Phi e^{-\frac{|\alpha|^2}{2}}\sum_{k=0}^\infty\frac{(-\alpha)^k}{\sqrt{k!}}\,e_k,
\label{33}\en
where $N_\Phi$ is a normalization factor which is usually taken equal to one for standard coherent states, \cite{gazeaubook}.

In a similar way we could find the ground state of $H^\dagger$. However, the easier way to find $\eta_0$ is just to recall the above cited symmetry between $H$ and $H^\dagger$. Hence $\eta_0$ is, a part the normalization, nothing but $\Phi_0$ with $\alpha$ replaced by $\overline{\beta}$:
\be
\eta_0=N_\eta e^{-\overline{\beta} c^\dagger+\beta c}e_0=N_\eta e^{-\frac{|\beta|^2}{2}}\sum_{k=0}^\infty\frac{(-\overline{\beta})^k}{\sqrt{k!}}\,e_k.
\label{34}\en
A connection between $N_\Phi$ and $N_\eta$ can be found by requiring that $\left<\Phi_0,\eta_0\right>=1$: $N_\Phi N_\eta=e^{\frac{1}{2}(|\alpha|^2+|\beta|^2+\alpha\beta)}$.

We want to show now that the same expansions as in (\ref{33}) and (\ref{34}) can be obtained by means of (\ref{221}) and (\ref{222}). We start specializing (\ref{221})  to $n=0$ and to our particular value of the coefficients: 
$$
p_{l+1}^{(0)}=\frac{-1}{\beta\sqrt{l+1}}\left(p_l^{(0)}(l+\alpha\beta)+p_{l-1}^{(0)}\alpha\sqrt{l}\right),
$$
with, as usual, $p_{-1}^{(0)}=0$ and  $p_{0}^{(0)}=1$. It is simple now to find the general solution of this recurrence relation: $p_k^{(0)}=\frac{(-\alpha)^k}{\sqrt{k!}}$, so that $c_k^{(0)}=\frac{(-\alpha)^k}{\sqrt{k!}}\,c_0^{(0)}$, for all $k=0,1,2,\ldots$. Hence, formula (\ref{213}) produces $$\Phi_0=\sum_kc_k^{(0)}\,e_k=c_0^{(0)}\sum_k\frac{(-\alpha)^k}{\sqrt{k!}}\,e_k,$$
which coincides with (\ref{33}), upon identifying $c_0^{(0)}$ with $N_\Phi e^{-\frac{|\alpha|^2}{2}}$.

Using now (\ref{222}), in the same way we recover $\eta_0$ in (\ref{34}). This is because we find $q_k^{(0)}=\frac{(-\overline{\beta})^k}{\sqrt{k!}}$.

{Notice that, in this simple example, we can also make use of the factorization $H=Y^\dag X$ to get the same results.
In fact, the ground $\Phi_0$ can be obtained as the vacuum of $X, X\Phi_0=0$ (and similarly $\eta_0$ as the ground of $Y$). Expanding $\Phi_0$ as
$$
\Phi_0=\sum_{k\geq0} c^{(0)}_k\varphi_k,\quad c^{(0)}_k=\left<\psi_k,\Phi_0\right>, 
$$ 
and using the biorthogonality conditions between $\F_\varphi$ and $\F_\psi$ we have
$$
0=\left<\psi_k,X\Phi_0\right>=c_kc^{(0)}_k+d_{k+1}c^{(0)}_{k+1}=\alpha c^{(0)}_k+\sqrt{k+1}c^{(0)}_{k+1},
$$
and as before the solution is $c^{(0)}_{k}=\frac{(-\alpha)^k}{\sqrt{k!}}c^{(0)}_{0}$, and therefore
$p^{(0)}_{k}=\frac{(-\alpha)^k}{\sqrt{k!}}$.
}

\vspace{2mm}

We now generalize these results to the higher energetic levels, $n>0$, and  show that the eigenstates of $H$ can be completely  determined again by using relation~\eqref{221}. First, for pedagogical reason, we discuss the case $n=1$ and then we extend the results. 

The eigenstates of $H$ are given by (\cite{baginbagbook}, p.~148) 

\begin{equation}\label{fed1}
	\Phi_n = \frac{1}{\sqrt{n!}} b^n \Phi_0 
\end{equation}
where $\Phi_0$ is as in \eqref{33}. It is easy to verify that

\begin{equation}
b^n e_k = \sum_{i=0}^n \left[ \binom{n}{i} \beta^{n-i} p_k(i) \right] e_{k+i}
\end{equation}
where $p_k(i) = \sqrt{k+1}\sqrt{k+2}\ldots\sqrt{k+i} $ if $i\geq 1$ and 0 if $i=0$. Therefore

\begin{equation}
\Phi_n = \frac{1}{\sqrt{n!}} N_\Phi e^{-\frac{|\alpha|^2}{2}}\sum_{k=0}^\infty\frac{(-\alpha)^k}{\sqrt{k!}}\, \sum_{i=0}^n \left[ \binom{n}{i} \beta^{n-i} p_k(i) \right]e_{k+i},
\end{equation}

The first ``excited'' state will be

\begin{equation}\label{phi1}
\Phi_1 =N_\Phi e^{-\frac{|\alpha|^2}{2}} \sum_{k=0}^\infty\frac{(-\alpha)^k}{\sqrt{k!}}\,  \left[ k + (-\alpha)\beta \right]e_{k},
\end{equation}

This result can also be recovered by starting from the recurrence relation~\eqref{221}, which looks now as follows:

\begin{equation}
	p_{l+1}^{(1)}=\frac{1}{\beta\sqrt{l+1}}\left(p_l^{(1)}(1-(l+\alpha\beta))-p_{l-1}^{(1)}\alpha\sqrt{l}\right)
\end{equation}
with $p_{-1}^{(1)} = 0 $ and $p_{0}^{(1)} = 1 $. It is easy to show that 

\begin{equation}
	c_{l}^{(1)} = \frac{c_{0}^{(1)}}{\beta} \sum_{l=0}^\infty\frac{(-\alpha)^l}{\sqrt{l!}}\,  \left[ l + (-\alpha)\beta \right]
\end{equation}
which allows to retrieve \eqref{phi1} provided that $c_{0}^{(1)} = \beta N_\Phi e^{-\frac{|\alpha|^2}{2}}$.

For arbitrary $n>1$ it is possible to write $\Phi_n$ as

\begin{equation}
	\Phi_n = \frac{1}{\sqrt{n!}} N_\Phi e^{-\frac{|\alpha|^2}{2}}
	\sum_{k=0}^\infty\frac{(-\alpha)^k}{\sqrt{k!}}\, 
	\sum_{j=0}^n \left[ \binom{n}{j} [(-\alpha)\beta]^{n-j} j! \binom{k}{j} \right]e_{k}
\end{equation}
and show that the recurrence relation yields the same result provided that 

\begin{equation}
	c_{0}^{(n)} =  \frac{\beta^n}{\sqrt{n!}} N_\Phi e^{-\frac{|\alpha|^2}{2}}
\end{equation}

Using the symmetry between $H$ and $H^\dagger$ it is easy to see that the ``excited'' states of $H^\dagger$ are given by  (\cite{baginbagbook}, p.~148)

\begin{equation}\label{fed2}
\eta_n = \frac{1}{\sqrt{n!}} (a^\dagger)^n \eta_0 = \frac{1}{\sqrt{n!}} N_\eta e^{-\frac{|\beta|^2}{2}}
\sum_{k=0}^\infty\frac{(-\overline{\beta})^k}{\sqrt{k!}}\, 
\sum_{j=0}^n \left[ \binom{n}{j} [(-\overline{\beta})\overline{\alpha}]^{n-j} j! \binom{k}{j} \right]e_{k}
\end{equation}
and this is the same result one gets starting from the recurrence relation~\eqref{222}, a part for a normalization factor.

A similar analysis can be repeated also for the Susy partners of $H$ and $H^\dagger$. Of course, since in the present situation $H_{susy}=H+\1$ and $H_{susy}^\dagger=H^\dagger+\1$, the eigenstates in (\ref{223}) coincides with those without the tilde: $\tilde\Phi_n=\Phi_n$ and $\tilde\eta_n=\eta_n$, while the eigenvalues obey the relation $\E_n=E_n+1=n+1$. If we now adopt (\ref{224}) and (\ref{225}), with $\E_0=1$, we recover again the correct eigenstates, a part for the normalization, which must be chosen with care.

\begin{remark}

This example can be generalized by introducing a sort of  {\em double translation}. More explicitly, we can consider, as starting points, two $\D$-pseudo bosonic operators $a$ and $b$, $[a,b]f=f$ for all $f\in\D$, and the related (already) non self-adjoint Hamiltonian $H=ba$: $H\neq H^\dagger$. Its eigenvalues are $E_n=n$, $n=0,1,2,3,\ldots$, while its eigenvectors are those in (\ref{fed1}). $H^\dagger$ has the same eigenvalues of $H$, while its eigenstates are those in (\ref{fed2}). If we now introduce two complex parameters $\gamma$ and $\delta$, and two new operators $A=a+\gamma\1$ and $B=b+\delta\1$, it is clear that $[A,B]f=f$ for all $f\in\D$. Moreover, in general, $A\neq B^\dagger$. It is easy to check that $\hat H=BA$ is $(\Phi,\eta)$-tridiagonal, and therefore, see Lemma \ref{lemma1}, $\hat H^\dagger$ is $(\eta,\Phi)$-tridiagonal. What we have discussed above can be essentially repeated, with minor changes, for $\hat H$, $\hat H^\dagger$, and for their Susy-partners. In particular, if the operators $a$ and $b$ are related to two bosonic operators $c$ and $c^\dagger$ as above, $a=c+\alpha\1$ and $b=c^\dagger+\beta\1$, it is clear that $A=c$ and $B=c^\dagger$ if $\alpha=-\gamma$ and $\beta=-\delta$. In this case, $H_0=\hat H=\hat H^\dagger$. When these equalities (or one of them) are not satisfied, the same results as in this section hold true with $(\alpha,\beta)$ replaced by $(\alpha+\gamma,\beta+\delta)$.

\end{remark}

\section{Extended settings}\label{sect4}

In this section we consider a slightly different form of the Hamiltonian which is not now tridiagonal in the sense of \eqref{21}, but whose matrix elements in two biorthogonal bases can still be written as a sum of three contributions. All the hypothesis of completeness, closability and domain invariance assumed in the previous sections are maintained, if not specified differently.

\begin{defn}
	\label{defn3}
	$H$ is called $(\varphi,\psi)_h$-tridiagonal, with $h>0$, if three sequences of complex numbers exist, $\{b_n\}$, $\{a_n\}$ and $\{b_n'\}$, such that
	\be
	\left<\psi_n,H\varphi_m\right>=b_n\delta_{n,m+h}+a_n\delta_{n,m}+b_n'\delta_{n,m-h},
	\label{41}\en
	for all $n,m=0,1,2,3,\ldots$. Furthermore, $H$ is called $e_h$-tridiagonal if $H$ is $(e,e)_h$-tridiagonal.
\end{defn}

Of course, if $h=1$ we return to the situation considered in Section \ref{sect2}. Hence, to make the situation interesting, in this section we assume $h>1$

%

Using \eqref{41} and completeness of $\F_{\varphi}$ and $\F_{\psi}$, we deduce the natural extensions of
\eqref{22} and \eqref{23}:
\bea
H\varphi_m&=&b_{m-h}'\varphi_{m-h}+a_m\varphi_m+b_{m+h}\varphi_{m+h},\\
H^\dagger\psi_m&=&\overline{b_{m}'}\,\psi_{m+h}+\overline{a_m}\,\psi_m+\overline{b_{m}}\,\psi_{m-h},
\ena
with the clear conditions that  $b_j=0$ for $j<h$ and $b'_j=0$ for $j<0$.
It is straightforward to factorize $H$ and $H^\dag$ by introducing the operator $X_h$ on $\mathcal{L}_\varphi$ and $Y_h$ on $\mathcal{L}_\psi$, defined as
\bea
X_h\varphi_n=c_n\varphi_n+d_{n}\varphi_{n-h},\quad
Y_h\psi_n=c'_n\psi_n+d'_{n}\psi_{n-h},\quad \forall n\geq0,
\ena  
with  $d_j=d'_j=0,j<h$.
It can be easily checked that
$H\varphi_n=Y_h^\dagger X_h\varphi_n$ and $H^\dag\psi_n=X_h^\dagger Y_h\psi_n$ by putting
\be
a_n=c_n\,\overline{c_n'}+d_n\,\overline{d_n'}, \qquad b_{n-h}'=d_n\,\overline{c_{n-h}'}, \qquad b_{n+h}=c_n\,\overline{d_{n+h}'},
\label{e27}\en
and that in general 
\be
[X_h,Y_h^\dagger]\varphi_n=\left(d_{n+h}\overline{d_{n+h}'}-d_{n}\overline{d_{n}'}\right)\varphi_n+d_n\left(\overline{c_{n}'}-\overline{c_{n-h}'}\right)\varphi_{n-h}+\overline{d_{n+h}'}\left(c_{n+h}-c_n\right)\varphi_{n+h}.
\label{e28}\en

To find a suitable recurrence formula for the determination of the eigenstates of $H$ and $H^\dagger$, we adopt the same strategy used in Section \ref{secDiag}.
In particular, if $\Phi_n,\eta_n$ are an eigenstates of $H$ and $H^\dagger$,  $H\Phi_n=E_n\Phi_n,\quad H^\dag\eta_n=\bar{E}_n\eta_n$, and we expand $\Phi_n$ and $\eta_n$ as in \eqref{213} and \eqref{215}, we obtain the following recurrence formulas:
\bea
 c_l^{(n)}E_n&=&c_{l-h}^{(n)}b_l+c_l^{(n)}a_l+c_{l+h}^{(n)}b_l',\label{e215}\\
\overline{d_l^{(n)}}E_n&=&\overline{d_{l-h}^{(n)}}b'_{l-h}+\overline{d_l^{(n)}}a_l+\overline{d_{l+h}^{(n)}}b_{l+h},
\label{e216}\ena
with   $c_{j}^{(n)}, d_{j}^{(n)}=0$, $j<h,j\neq0$,
and the related
\bea
p_{l+h}^{(n)}&=&\frac{1}{b_l'}\left(p_l^{(n)}(E_n-a_l)-p_{l-h}^{(n)}b_l\right),\label{e221}\\
\overline{q_{l+h}^{(n)}}&=&\frac{1}{b_{l+h}}\left(\overline{q_l^{(n)}}(E_n-a_l)-\overline{q_{l-h}^{(n)}}b_{l-h}'\right),
\label{e222}\ena
where the coefficients $p_j^{(n)},q_j^{(n)}$ are defined as in \eqref{219} with $p_{j}^{(n)}=q_{j}^{(n)}=0,\quad j<h$ with the exceptions $p_{0}^{(n)}=q_{0}^{(n)}=1$.

\subsection{A squeezed Hamiltonian} 

Despite the general $(\varphi,\psi)_h$-tridiagonal settings seems to be a straightforward extension of the $(\varphi,\psi)$-tridiagonal case,
some relevant Hamiltonians in physics can be related to them, giving rise to states having interesting features.
In the following we consider an Hamiltonian from which a (bi)-squeezed state can be obtained by applying our recurrence procedure, \cite{BGS2018}.

Suppose that there exist two pseudo-bosonic operators $a,b$ satisfying the commutation rules $[a,b]=\1$ in $\D$, dense subspace of $\Hil$.
As usual, we suppose that  $\D$ is invariant under the action of $a$, $b$, and their adjoints. Following \cite{baginbagbook} we have
\bea
a\varphi_n&=&\sqrt{n}\varphi_{n-1},\qquad b\varphi_n=\sqrt{n+1}\varphi_{n+1},\label{raising1}\\
b^\dag\psi_n&=&\sqrt{n}\psi_{n-1}, \qquad a^\dag\psi_n=\sqrt{n+1}\psi_{n+1}.
\label{raising2}
\ena
Next we introduce the {\em squeezing-like operators}, labelled by the complex variable $z=re^{i\theta},\quad r>0$:
\bea
 \mathcal{\Sc} (z)f=\sum_{k\geq0}\frac{1}{k!}\left(\frac{z}{2}b^2-\frac{\overline{z}}{2}a^2\right)^{k} f,\label{squeez_S} \quad \mathcal{T} (z)f=\sum_{k\geq0}\frac{1}{k!}\left(\frac{z}{2}(a^\dag)^2-\frac{\overline{z}}{2}(b^\dagger)^2\right)^{k} f\label{squeez_T},
\ena
for all $f\in\D$, which under our assumptions converge strongly in $\D$ to  $e^{ \frac{1}{2}zb^2-\frac{1}{2}\bar{z}a^2}$ and to $e^{ \frac{1}{2}z(a^\dag)^2-\frac{1}{2}\bar{z}(b^\dag)^2}$ respectively, see \cite{BGS2018}. 
We can now introduce the operators $A=\mathcal{\Sc} (z)a \mathcal{T}^\dag(z), B=\mathcal{T} (z)b \mathcal{\Sc}^\dag(z)$ which reduces to 
\bea
A=\cosh(r)a+e^{i\theta}\sinh(r)b,\quad B=\cosh(r)b+e^{-i\theta}\sinh(r)a,
\ena
see \cite{BGS2018}.
They look like $\D$-pseudo bosonic operators too, because they satisfy $[A,B]=\1$ in $\D$. 

We now define the Hamiltonian
\bea
H=BA=\mu(z)ba+\lambda(z)a^2+\lambda(\overline z)b^2+\sinh(r)^2\1,
\ena
where $\mu(z)=\cosh(2r),\lambda(z)=e^{-i\theta}\cosh(r)\sinh(r)$.
Of course $H$ is $(\varphi,\psi)_2$ tridiagonal,
because, using the raising and lowering conditions \eqref{raising1}-\eqref{raising2}, we have
	\be
	\left<\psi_n,H\varphi_m\right>=b_n\delta_{n,m+2}+a_n\delta_{n,m}+b_n'\delta_{n,m-2},
	\label{Hs1}\en
with 
\be a_n=n\mu(z)+\sinh(r)^2,\quad b_n=\lambda(\overline z)\sqrt{n(n-1)},\quad  b'_n=\lambda(z)\sqrt{(n+1)(n+2)},
\label{coeffsquiz}\en
for all $n\geq0$.


The eigenvalues of $H$ are clearly $E_n=n$. 
Hence, the ground $\Phi_0$ of $H$ satisfies  $H\Phi_0=0$.
To find the expressions for $\Phi_0$ we expand it as 
$$
\Phi_0=\sum_{k\geq0} c^{(0)}_k\varphi_k,\quad c^{(0)}_k=\left<\psi_k,\Phi_0\right>, 
$$ 
and we can find the coefficients $c^{(0)}_k$ by means of \eqref{e215} and \eqref{e221}.
In particular, we have
\bea
p^{(0)}_{k+2}=\frac{1}{b'_k}\left(-a_kp^{(0)}_k-b_kp^{(0)}_{k-2}\right),\label{recurrp2k}
\ena
with the initial conditions $p^{(0)}_{-2}=p^{(0)}_{-1}=p^{(0)}_{1}=0$ and $p^{(0)}_{0}=1$.
This recurrence formula admits the solution
\bea
p^{(0)}_{2k}=\left(\frac{-e^{i\theta}\tanh(r)}{2}\right)^{k}\frac{\sqrt{(2k)!}}{k!},\quad p^{(0)}_{2k+1}=0, \quad\forall k\geq0,
\label{coeffp2}
\ena
so that we have
$$
\Phi_0=c^{(0)}_0\sum_{k\geq0}\left(\frac{-e^{i\theta}\tanh(r)}{2}\right)^{k}\frac{\sqrt{(2k)!}}{k!}\phi_{2k}.
$$
Looking for the ground $\eta_0$ of $H^\dag$ we obtain in a similar way
$$
\eta_0=d^{(0)}_0\sum_{k\geq0}\left(\frac{-e^{i\theta}\tanh(r)}{2}\right)^{k}\frac{\sqrt{(2k)!}}{k!}\psi_{2k}.
$$
We notice that $\Phi_0$ and $\eta_0$ are (proportional) to the bi-squeezed states, \cite{BGS2018}.
In particular, choosing a normalization in such a way that $c^{(0)}_0=d^{(0)}_0=e^{-\frac{1}{2}\log(\cosh(r))}$, we get
$\left<\eta_0,\Phi_0\right>=1$.

Of course, we can also use the factorization $H=Y^\dag X$ to recover the same results,
and recovering $\Phi_0$ as the vacuum of $X$ ($X\Phi_0=0$).
In this case the condition \eqref{e27} with \eqref{coeffsquiz} is satisfied by choosing 
$$
c_n = c'_n = \sqrt{n + 1} \sinh(r),\quad d_n = d'_n = e^{-i\theta}\sqrt{n}\cosh(r),
$$  
and to retrieve $\Phi_0$ we require that
$$\left<\psi_k,X\Phi_0\right>=c_k c^{(0)}_k + 
d_{k+2} c^{(0)}_{k+2}=0, \quad \forall k\geq0.$$
This implies that the coefficients $p^{(0)}_k$  satisfy the recurrence formula
$$
p^{(0)}_{k+2}=-e^{i\theta}\tanh(r)\sqrt{\frac{k+1}{k+2}}p^{(0)}_{k},
$$
which again is satisfied by \eqref{coeffp2}.
The advantage of using the factorization $H=Y^\dag X$ relies in the fact that we can recover an easier recurrence formula which uses a relationship between two consecutive even coefficients $p^{(0)}_j$ only, instead of using the recurrence formula \eqref{recurrp2k}, where three terms are involved.

{Of course, once we have retrieved the ground states of $H$ and $H^\dag$,
we can easily find the ground states of their Susy partners.
In fact, as we have
$$
H_{susy}=AB=\mu(z)ba+\lambda(z)a^2+\lambda(\overline z)b^2+\cosh(r)^2\1=H+\1,
$$
the ground states $\tilde\Phi_0,\tilde\eta_0$ of $H_{susy}$ and $H^\dagger_{susy}$ coincide with $\Phi_0,\eta_0$, respectively, but with eigenvalues 1. This is not very different from what we have deduced in Section \ref{sect3.a}.
}

\section{Conclusions}\label{sect5}

In this paper we have considered non self-adjoint tridiagonal Hamiltonians and their Susy partners, and discussed the possibility to factorize them using operators which may, or may not, be pseudo-bosonic. Three-terms recurrence relations have been deduced and have been used in the construction of the eigenstates of the Hamiltonians involved in our analysis. Within the framework proposed here we have considered a shifted harmonic oscillator, and  a {  shifted} infinitely deep square well. 

Furthermore, we have extended our results to $(\varphi,\psi)_h$-tridiagonal matrices, and we have shown how  this extension, if $h=2$, is connected with squeezed and bi-squeezed states.

\section*{Acknowledgements}
The authors acknowledge partial support from Palermo University. F.B. and F.G.  acknowledge partial support from G.N.F.M. of the I.N.d.A.M.
F.G. also  acknowledge partial support from MIUR.


\begin{thebibliography}{99}

\bibitem{yam} H. A. Yamani, Z. Mouayn, {\em Supersymmetry of tridiagonal Hamiltonians}, J. Phys. A: Math. Theor. {\bf47}, 265203 (2014)

\bibitem{bagbook} F. Bagarello, J. P. Gazeau, F. H. Szafraniec and M. Znojil Eds., {\em Non-selfadjoint operators in quantum physics: Mathematical aspects}, John Wiley and Sons Eds, Hoboken, New Jersey (2015)


\bibitem{ben} C. M. Bender, {\em PT Symmetry in Quantum and Classical Physics}, World Scientific, Singapore,  2019


\bibitem{mosta} A. Mostafazadeh, {\em Pseudo-Hermitian representation of Quantum Mechanics}, Int. J. Geom. Methods Mod. Phys. {\bf 7}, 1191-1306, (2010)

\bibitem{gainloss} M. Agaoglou, M. Feckan, M. Pospisil, V. M. Rothos, H. Susanto, {\em Gain-loss-driven travelling waves in PT-symmetric nonlinear metamaterials}, Wave Motion, {\bf 76}, 9-18, (2018) 

\bibitem{baginbagbook} F. Bagarello, {\em Deformed canonical (anti-)commutation relations and non hermitian Hamiltonians}, in {Non-selfadjoint operators in quantum physics: Mathematical aspects}, F. Bagarello, J. P. Gazeau, F. H. Szafraniec and M. Znojil Eds., John Wiley and Sons Eds, Hoboken, New Jersey (2015)

\bibitem{camillo} J.-P. Antoine and C. Trapani, {\em Partial inner product spaces, metric operators and generalized
hermiticity}, J. Phys. A: Math. Theor. {\bf 46}, 025204 (2013)

\bibitem{petr1}  D. Krejcirik, P. Siegl, M. Tater, J. Viola, {\em On the metric operator for the imaginary cubic oscillator}, Phys. Rev. D, {\bf 86}, 121702 (2012)

\bibitem{petr2} D. Krejcirik, P. Siegl,  {\em Pseudospectra in non-Hermitian quantum mechanics}, J. Math. Phys., {\bf 56}, 103513 (2015)

\bibitem{BGS2018} F. Bagarello, F. Gargano, S. Spagnolo, {\em Bi-squeezed states arising from pseudo-bosons},  Journal of Physics A: Mathematical and Theoretical, \textbf{51} (45), 455204 (2018)

\bibitem{BGST2017} F. Bagarello, F. Gargano, S. Spagnolo, S.Triolo, {\em Coordinate representation for non-Hermitian position and momentum operators},  Proceedings of the Royal Society A, \textbf{473} (2205), 20170434  (2017)

\bibitem{bag2} F. Bagarello, A. Fring {\em A non self-adjoint model on a two dimensional noncommutative space
	with unbound metric},   Phys. Rev. A, {\bf 88}, 042119 (2013)

\bibitem{bag3} F. Bagarello, M. Znojil, {\em Non linear pseudo-bosons versus hidden Hermiticity}, J. Phys. A, {\bf 44} 415305 (2011)

\bibitem{bag4} F. Bagarello, S. T. Ali, J. P. Gazeau, {\em Modified Landau levels, damped harmonic oscillator and two-dimensional pseudo-bosons}, J. Math. Phys., {\bf 51}, 123502 (2010)

\bibitem{chri}O. Christensen, {\em An Introduction to Frames and Riesz Bases}, Birkh\"auser, Boston, (2003)

\bibitem{Dong} Dong, Shi-Hai. Factorization method in quantum mechanics. Vol. 150. Springer Science \& Business Media, 2007.

\bibitem{dav1} E. B. Davies, {\em Pseudospectra, the harmonic oscillator and complex resonances}, Proc. Roy. Soc. London A, {\bf 455},
585-599, (1999)

\bibitem{dav2} E. B. Davies, B. J. Kuijlaars, {\em Spectral asymptotics of the non-self-adjoint harmonic oscillator}, J. London Math. Soc., {\bf
	70}, 420-426, (2004)

\bibitem{bag5} F. Bagarello, {\em From self-adjoint to non self-adjoint harmonic oscillators: physical consequences
	and mathematical pitfalls}, Phys. Rev. A, {\bf 88}, 032120 (2013)

\bibitem{bag6} F. Bagarello, {\em Examples of Pseudo-bosons in quantum mechanics},  Phys. Lett. A,  {\bf 374}, 3823-3827 (2010)

\bibitem{siegl} B. Mityagin, P. Siegl, {\em Root system of singular perturbations of the harmonic oscillator type operators}, Lett. Math. Phys., {\bf 106} (2), 147-167 (2016)

\bibitem{milos} M. Znojil, {\em PT -symmetric harmonic oscillators}, Phys.Lett. A, {\bf 259}, 220-223 (1999)

\bibitem{gazeaubook}  J. P. Gazeau, {\em Coherent States in Quantum Physics},    John Wiley and Sons Eds, New York (2009)

\bibitem{perelomov} A. Perelomov, {\em Generalized coherent states and their applications}, Springer, Berlin (1986)

\bibitem{didier} M. Combescure,  R. Didier, {\em Coherent States and Applications in Mathematical Physics},   Springer, (2012)

\bibitem{aag}S. T. Ali, J. P. Antoine and J.P. Gazeau,	{\em Coherent States, Wavelets, and Their Generalizations}, Springer 2014



\end{thebibliography}
\end{document}